\documentclass[sn-mathphys-num]{sn-jnl}

\usepackage[T1]{fontenc}
\usepackage[utf8]{inputenc}
\usepackage{amsmath,amssymb,amsfonts,mathtools}
\usepackage{booktabs}

\theoremstyle{thmstyleone}
\newtheorem{theorem}{Theorem}[section]
\newtheorem{lemma}[theorem]{Lemma}

\theoremstyle{thmstylethree}

\newcommand{\trans}{\operatorname{tr}}
\renewcommand{\proofname}{Proof}

\begin{document}
\renewcommand{\proofname}{Proof}
\renewcommand{\abstractname}{Abstract}
\emergencystretch=.8em

\title[On the Offline Version of the Time-Optimal k-Server Problem]{On the Offline Version of the
Time-Optimal \texorpdfstring{$k$}{k}-Server Problem}

\author*[1]{\fnm{Oleg} \sur{Lomachenko}}\email{lomachenko.science@gmail.com}

\affil*[1]{\orgname{GoodsForecast}, \orgaddress{\city{Moscow}, \country{Russia}}}

\abstract{
We consider the offline problem of parallel relocation of $k$ identical mobile
resources. After each request, known in advance, the resources may move
simultaneously, and the duration of a step is determined by the longest
individual movement. This model is equivalent to the offline version of the
time-optimal $k$-server problem. We prove that the decision version is strongly
NP-complete already on metrics of finite subsets of the Euclidean line, or
equivalently, on vertex metrics of weighted paths. Thus, the computational
hardness persists even under a linear arrangement of the admissible resource
locations. As a positive result, we show that the problem can be solved exactly
in polynomial time on metrics of undirected unweighted graphs with a universal
vertex.
}

\keywords{parallel relocation of resources, time-optimal $k$-server,
offline optimization, computational complexity, vertex cover, weighted path,
graph metric}

\maketitle

\section{Introduction}
\label{sec:introduction}

Consider the problem of planning the relocation of several identical mobile
resources between prescribed positions. Assume that the requests are known in
advance and served in a fixed order, and that the next stage begins only after
all movements of the current stage have been completed. When all resources have
the same speed, the duration of a transition is determined by the longest
individual movement. Other resources can be relocated in parallel without
additional delay. The subject of this paper is the computational complexity of
minimizing the total duration of such transitions.

One applied context for the coordination of mobile resources is robotic
warehousing. For example, in the Kiva system, robots transport mobile racks to
workstations~\cite{wurman2008warehouse}. We do not model such a warehouse in
full, but use it as a relocation abstraction. Unlike lifelong multi-agent
pickup and delivery~\cite{ma2017mapd}, a request here specifies a single point,
and there are no pickup/delivery operations or collision constraints. The fixed
order of requests and synchronization barriers are treated as modeling
assumptions.

Formally, this model is the offline version of the time-optimal $k$-server
problem. The classical $k$-server problem was introduced
in~\cite{manasse1990competitive}. In the standard formulation, the cost is the
sum of movements, and most work has focused on online
algorithms~\cite{koutsoupias2009server}. For this classical sum-distance model,
Bertsimas et al. proposed a mixed-integer formulation of the offline problem and
robust/adaptive approaches to online solutions under uncertainty about future
requests~\cite{bertsimas2019optimization}.

In the time model, resources move in parallel, and the transition cost is the
maximum distance traveled. This variant was studied by Koutsoupias and
Taylor~\cite{koutsoupias2004cnn}. They attribute its original formulation,
under the name min-time, to Fiat, Rabani and Ravid. A modern study of this
model under the name time-optimal $k$-server is presented by Frei et
al.~\cite{frei2025time}. Unlike in the distance model, simultaneous movements
can reduce the cost here, and therefore restricting attention to algorithms
that move only one resource per step is not valid in general.

Here we establish a computational boundary for offline optimization. Unlike the
classical distance model, where an optimal offline schedule can be computed in
polynomial time~\cite{chrobak1991new}, replacing the sum by a maximum makes the
problem strongly NP-hard. This already holds on finite subsets of the Euclidean
line, when the number of resources is part of the input. The admissible positions
are only the points of the given metric space.

A different source of warehouse-related computational hardness has been studied
for picker routing in multi-block warehouses with an unbounded number of
blocks~\cite{prunet2025picker}. There, the objective is a closed route of a
single picker with a selectable order of visited points. Here, the request order
is fixed, and solutions specify parallel relocations of several resources.
Therefore, the picker-routing result does not establish hardness for the problem
considered here.

Together with the negative result, we give one positive boundary. If the metric
is induced by an unweighted graph with a universal vertex, then the exact offline
optimum can be found in polynomial time. In such a network, any position can
reach the common base vertex in one unit of time. The class includes uniform
metrics, unweighted stars, and wheels.

\section{Model}
\label{sec:model}

Let $\mathcal{M}=(X,d)$ be a finite metric space, and let $k\ge1$ be the number
of identical mobile resources. A configuration $C\colon[k]\to X$ specifies their
positions. Several resources may be located at the same point. The input
consists of an initial configuration $C_0$ and a sequence of requests
\[
  \sigma=(r_1,\ldots,r_N), \qquad r_t\in X.
\]
After request $r_t$, a configuration $C_t$ is chosen that contains at least one
resource at point $r_t$. The transition duration is
\[
  \Delta_t=\max_{i\in[k]} d\bigl(C_{t-1}(i),C_t(i)\bigr),
\]
and the total time of the schedule is $\sum_{t=1}^N\Delta_t$. The resources may
be regarded as numbered, but when moving between two multisets of positions, the
matching between them is chosen optimally.

All configurations take values only in $X$. For a weighted path, this means that
after each request the resources are located at vertices of the path, and not at
additional points inside its edges. Partial traversal of a long edge with
continuation in later stages is not allowed. For the time model, this is an
essential restriction~\cite{frei2025time}.

We denote the decision version by \textsc{Offline-Time-$k$-Server}: together
with the data described above, a threshold $L$ is given, and the question is
whether there exists a schedule of duration at most $L$. We assume that the
metric is given explicitly by a matrix of rational distances or by a connected
undirected graph with positive rational edge lengths, and that $L$ is also
rational. Numbers are written in binary, and $C_0$ contains an explicit list of
$k$ positions. Under this encoding, the problem belongs to NP: a certificate
contains $Nk$ indices of resource positions, and feasibility and the sum of
durations can be checked in polynomial time. For $N=0$, the optimum is zero.

\section{Strong NP-Completeness on the Line}
\label{sec:hardness}

We prove the main result by a reduction from the NP-complete problem
\textsc{Vertex Cover}~\cite{karp1972reducibility}. Let its input be a simple
graph $G=(V,E)$ with $|E|=m\ge1$ and an integer $K$, where $0\le K\le m$. This
restriction does not change NP-hardness: instances with $m=0$, $K<0$, or
$K\ge m$ are solved directly. Put $D:=m+1$ and fix orders of the vertices and
edges. For an edge $e=\{u,v\}$, denote its endpoints by $a_e,b_e$ so that
$a_e$ precedes $b_e$ in the vertex order.

\subsection{Construction}

For each edge $e$, create a block of five points
\[
 H_{a_e,e},\quad R_{a_e,e},\quad B_e,\quad R_{b_e,e},\quad H_{b_e,e},
\]
placed in the indicated order on the line. The lengths of the four consecutive
segments are $1,D,D,1$. The points $R_{u,e}$ and $H_{u,e}$ form the incidence
pair $P_{u,e}$. The last reserve point of block $e_i$ is connected to the first
reserve point of block $e_{i+1}$ by a segment of length $2D$. The result is one
weighted path, whose shortest-path metric is also the metric of a finite subset
of the Euclidean line.

Set $k:=4m$ and place one resource at each point of the form $R_{u,e}$ and
$H_{u,e}$. The request sequence starts with all blocker points
\[
  B_{e_1},B_{e_2},\ldots,B_{e_m}.
\]
Then, for each vertex $u$ in the fixed order, its test follows: first all
$H_{u,e}$ for $e\in\delta(u)$, and then all $R_{u,e}$ for $e\in\delta(u)$. The
threshold is
\[
  L:=D+K<2D.
\]
All requests in this construction are pairwise distinct.

We will need only the following immediate properties of the distances. For each
pair $P_{u,e}$, the distance inside the pair is $1$, leaving it for any outside
point requires at least $D$, and leaving it for a point other than the blocker
$B_e$ requires at least $2D$. Any point of another block is at distance at least
$3D$ from $B_e$.

\subsection{Correctness of the Reduction}

\begin{lemma}
\label{lem:forward}
If $A\subseteq V$ is a vertex cover, then the constructed instance can be
served in time at most $D+|A|$.
\end{lemma}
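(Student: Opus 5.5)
The plan is to write down one explicit schedule and add up its step durations. First, for every edge $e$ I fix a covering endpoint $c_e\in A\cap e$, which exists because $A$ is a vertex cover. The schedule has two phases, matching the two parts of the request sequence: a blocker phase of total cost exactly $D$, and a vertex-test phase of cost at most $1$ per vertex of $A$ and $0$ per vertex outside $A$.

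\emph{Blocker phase.} At the very first request $B_{e_1}$, all $m$ moves are made at once: for every edge $e$, the resource at $R_{c_e,e}$ moves to $B_e$. Each such move has length $D$, so $\Delta_1=D$. Afterwards every blocker point is occupied, so the requests $B_{e_2},\dots,B_{e_m}$ are served with no movement, and the phase costs exactly $D$.

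\emph{Vertex-test phase.} In the resulting configuration, $B_e$ and the points $H_{u,e}$ are all occupied. The only vacant points are the reserve points $R_{c_e,e}$, one per edge.
\begin{itemize}
\item If $u\notin A$, then $u\ne c_e$ for every $e\in\delta(u)$. Hence every point of every pair $P_{u,e}$ is still occupied, and the test of $u$ costs $0$.
\item If $u\in A$, the $H$-requests of its test are served for free. At the first $R$-request of the test, every pair $P_{u,e}$ with $c_e=u$ has its resource at $H_{u,e}$ but not at $R_{u,e}$. All these lone resources move simultaneously from $H_{u,e}$ to $R_{u,e}$, at distance $1$, so this step costs $1$. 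All remaining $R$-requests of $u$ are then served for free.
\end{itemize}

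The one point needing care is that vacating $H_{u,e}$ never causes a later problem. All requests are pairwise distinct, and the only request at $H_{u,e}$ occurs in the test of $u$, before its $R$-requests. Likewise, $B_e$ and the pairs belonging to other vertices are never touched. Moreover, each configuration $C_t$ contains $r_t$, as required by the model.

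Summing, the blocker phase costs $D$ and the test phase costs at most $|A|$, so the total time is at most $D+|A|$. The cost of a vertex $u\in A$ is $0$ rather than $1$ if no edge chose $c_e=u$, which is why the bound is an inequality. There is no real obstacle here. The main step to get right is doing all blocker moves in the first step in parallel, so that they cost a single $D$, and likewise batching each vertex's repairs into one unit step.
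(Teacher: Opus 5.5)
Your proposal is correct and follows essentially the same route as the paper. In one parallel step of cost $D$, it moves the resource at the reserve point of a chosen covering endpoint to each blocker; it then pays at most one unit per vertex of $A$ by batching the moves $H_{u,e}\to R_{u,e}$ at the first reserve request of that vertex's test (your explicit check that vacating $H_{u,e}$ causes no later conflict is a detail the paper leaves implicit).
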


\begin{proof}
For each $e=\{u,v\}$, choose an endpoint $a(e)\in A\cap\{u,v\}$. At the first
request, simultaneously move the resource from $R_{a(e),e}$ to $B_e$ for all
$e$. All these movements have length $D$, so the first transition takes time
$D$. After it, all requests to blockers are already covered.

Until the test of vertex $u$, leave its reserve points unchanged. If $u$ was
not chosen for any incident edge, the whole test is served without movements.
Otherwise, at the first request in the reserve part of the test, simultaneously
move the necessary resources from $H_{u,e}$ to $R_{u,e}$. This takes one unit of
time. The same $u$ is paid for at most once, so the total duration does not
exceed $D+|A|$.
\end{proof}

Call a schedule \emph{short} if its duration is less than $2D$. Any schedule
satisfying the threshold $L$ is short.

\begin{lemma}
\label{lem:invariant}
In a short feasible schedule, after the first request every blocker is occupied
and remains occupied until the end. For each edge $e=\{u,v\}$, after the first
request the number of resources in each of the pairs $P_{u,e}$ and $P_{v,e}$ is
constant. Both pairs are nonempty, and at least one of them contains exactly one
resource.
\end{lemma}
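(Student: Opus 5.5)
Everything follows from one observation: in a short schedule every single transition has duration less than $2D$, and the total of all transitions after the first is less than $2D-\Delta_1$. I will combine this with the distance facts listed after the construction.

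\emph{Blockers are occupied after the first request.} Initially the blockers are empty, and the first $m$ requests are $B_{e_1},\dots,B_{e_m}$. The first request forces a movement of length at least $D$, since every resource sits in some pair and leaving a pair costs at least $D$. So $\Delta_1\ge D$, and all later transitions together cost less than $D$. A later movement to or from a blocker costs at least $D$, because the distance from $B_e$ to any point other than itself is at least $D$. Hence after the first transition no resource ever enters or leaves a blocker. Each request $B_{e_j}$ with $j\ge2$ must therefore already be served after the first transition, so all blockers are occupied after the first request and stay occupied until the end.

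\emph{The counts in each pair are constant.} Every movement after the first transition has length less than $D$. The only points within distance less than $D$ of a pair point are the two points of the same pair; every other point, including $B_e$, is at distance at least $D$. So after the first request, resources move only inside their own pair, and each pair's count is constant. The same bound $\Delta_1<2D$ controls the first transition itself: a resource leaving $P_{u,e}$ can only go to $B_e$, since every other outside point is at distance at least $2D$, and no resource can reach $B_e$ from another block, since that distance is at least $3D$. Because $B_e$ is occupied after step $1$, at least one of the resources initially in $P_{u,e}\cup P_{v,e}$ moves to $B_e$.

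\emph{Nonemptiness and the exactly-one condition.} Each pair initially holds $2$ resources. The only destination outside a pair is the blocker, and no resource enters a pair from elsewhere, so after step $1$ the counts satisfy $c_u+c_v=4-s_e$. Here $s_e\ge1$ is the number of resources placed on $B_e$. Summing over all edges, the total number of resources is $4m$, so $\sum_e s_e=\sum_e(4-c_u-c_v)$.

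Nonemptiness comes from the test phase. Every $H_{u,e}$ and $R_{u,e}$ is requested later. Since resources in a pair never leave it after step $1$, and none enter it, an empty pair could never serve these requests. Hence $c_u,c_v\ge1$. Then $c_u+c_v\le3$, so $\min(c_u,c_v)=1$, which is the exactly-one condition.

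The main point needing care is the first transition. Resources may move between other blocks' points during step $1$ in principle, but any such movement has length at least $2D$, so it is excluded by shortness. That makes the accounting $c_u+c_v=4-s_e$ valid.
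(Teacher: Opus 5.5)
Your proof is correct and follows essentially the same route as the paper. The bound $\Delta_1\ge D$ leaves less than $D$ for all later transitions, which keeps the blockers occupied and confines later moves to within a pair. The bound $\Delta_1<2D$ lets a resource leaving a pair in the first step go only to its own blocker, and the count $c_u+c_v=4-s_e\le 3$ together with nonemptiness gives the singleton. Your explicit check that no resource enters a pair during the first transition only spells out the paper's counting step.
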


\begin{proof}
The first request to $B_{e_1}$ requires a transition of duration at least $D$.
After it, less than $D$ total time remains, so no later transition can contain a
movement of length at least $D$. If some blocker is not occupied immediately
after the first transition, a later request to it cannot be served. Leaving an
occupied blocker later is also impossible.

The first transition has duration less than $2D$. Therefore a resource from an
incidence pair can either remain inside it or move to its own blocker, but not
to another pair or another block. In each pair, after the first transition at
most two resources remain, and then their number is constant. A pair cannot be
empty: later requests to its points could not be served without entering from
outside. At least one of the four resources of each block must occupy its
blocker. Hence, in the two nonempty pairs together at most three resources
remain, and at least one of them is a singleton.
\end{proof}

\begin{lemma}
\label{lem:reverse}
From any short feasible schedule of duration $F$, one can extract a vertex
cover of size at most $F-D$.
\end{lemma}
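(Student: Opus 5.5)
Using Lemma~\ref{lem:invariant}, call a pair $P_{u,e}$ a \emph{singleton} if, after the first transition, it holds exactly one resource. Define
\[
A:=\{u\in V : P_{u,e}\text{ is a singleton for some } e\in\delta(u)\}.
\]
By Lemma~\ref{lem:invariant}, every edge $e=\{u,v\}$ has at least one singleton pair among $P_{u,e},P_{v,e}$. Hence $A$ is a vertex cover. It remains to show $|A|\le F-D$.

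The first transition costs at least $D$, since $B_{e_1}$ is at distance at least $D$ from every initial position. So it suffices to show that the transitions after the first have total duration at least $|A|$. We will charge at least one unit of time to each vertex $u\in A$, using a transition that occurs during the test of $u$. The tests of distinct vertices occupy disjoint sets of requests. Also, every blocker request lies in the initial segment $B_{e_1},\dots,B_{e_m}$, so the tests come after the first request. These transitions are therefore distinct, different from the first transition, and $F\ge D+|A|$.

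Fix $u\in A$ with a singleton pair $P_{u,e}$. By Lemma~\ref{lem:invariant}, this pair contains exactly one resource from the first transition to the end. In a short schedule, no movement after the first transition can have length at least $D$, and leaving a pair requires length at least $D$. So this single resource stays in $\{H_{u,e},R_{u,e}\}$, and no other resource enters the pair. Within the test of $u$, the request $H_{u,e}$ comes before the request $R_{u,e}$. While $H_{u,e}$ is served, the resource is at $H_{u,e}$. While $R_{u,e}$ is served later, it is at $R_{u,e}$. Hence some transition strictly between these two moments, but within the test of $u$, moves it across distance $1$, and that transition has duration at least $1$. Because every transition in the window from the request to $H_{u,e}$ up to the request to $R_{u,e}$ belongs to the test of $u$, we charge that transition to $u$.

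The step that needs the most care is checking that the charged transitions really are distinct. Two vertices never share a charged transition, because their tests are disjoint blocks of consecutive requests. The only detail is that the transition \emph{into} the first request of a test belongs to that test and not to the preceding one, and this holds by indexing $\Delta_t$ by the request $r_t$. A single $u$ with several singleton pairs is charged only once, since $A$ is a set. Summing these charges gives $F\ge D+|A|$, which proves the lemma.
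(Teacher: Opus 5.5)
Your proof is correct and follows the paper's argument. You use the same vertex cover $A$ (vertices having a singleton incidence pair), charge at least one unit of time to the window between the requests $H_{u,e}$ and $R_{u,e}$ inside the test of $u$, and note, as the paper does, that these windows are disjoint from each other and from the first transition of cost at least $D$. The only cosmetic difference is that you find a single transition of duration at least $1$, using that the resource is confined to the two points of the pair, whereas the paper bounds $\sum_{t=h_u+1}^{r_u}\Delta_t\ge 1$ directly by the triangle inequality.
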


\begin{proof}
Include a vertex $u$ in the set $A$ if and only if after the first request some
pair $P_{u,e}$ contains exactly one resource. By Lemma~\ref{lem:invariant}, for
each edge $e=\{u,v\}$ at least one of the pairs $P_{u,e}$ and $P_{v,e}$ is a
singleton. Therefore $A$ is a vertex cover.

If $u\in A$, choose a pair $P_{u,e}$ with a single resource and denote the
indices of its requests by $h_u<r_u$. After the first transition, the resource
cannot leave the pair, and no other resource can enter it. Therefore, at times
$h_u$ and $r_u$, the same resource is located at $H_{u,e}$ and $R_{u,e}$,
respectively. By the triangle inequality,
\[
  \sum_{t=h_u+1}^{r_u}\Delta_t\ge d(H_{u,e},R_{u,e})=1.
\]
These ranges of transitions lie inside the tests of the corresponding vertices,
are pairwise disjoint, and do not contain the first transition. Summing the
inequalities and using $\Delta_1\ge D$, we obtain $F\ge D+|A|$.
\end{proof}

\begin{theorem}
\label{thm:hardness}
\textsc{Offline-Time-$k$-Server} is strongly NP-complete on metrics of finite
subsets of the Euclidean line, or equivalently, on vertex metrics of weighted
paths, when the number of resources is part of the input. The statement remains
true for integer coordinates and pairwise distinct requests.
\end{theorem}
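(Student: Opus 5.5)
The proof assembles the three lemmas into a polynomial reduction from \textsc{Vertex Cover} and then checks membership in NP and the size of the numbers. Membership in NP was already noted in Section~\ref{sec:model}: a certificate lists $Nk$ positions, and feasibility and the total duration are checked in polynomial time.

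\emph{Polynomial size.} Given an instance $(G,K)$ with $m\ge1$ and $0\le K\le m$, the construction produces $5m$ points on a weighted path, $k=4m$ resources, $N=m+2\sum_u\deg(u)=5m$ requests, and the threshold $L=D+K$. All edge lengths lie in $\{1,D,2D\}$ with $D=m+1$. Placing the leftmost point at coordinate $0$, every coordinate is an integer bounded by $O(m^2)$, and $L\le 2m+1$. Hence all numbers in the instance are bounded by a polynomial in the unary size of the input, which is exactly what strong NP-hardness requires. The path metric coincides with the metric of these integer points on the line, so both formulations in the statement are covered. Pairwise distinctness of the requests was noted in the construction.

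\emph{Correctness.} If $G$ has a vertex cover $A$ with $|A|\le K$, then Lemma~\ref{lem:forward} gives a schedule of duration at most $D+|A|\le L$. Conversely, suppose a schedule of duration $F\le L$ exists. Since $L=D+K<2D$ (using $K\le m<D$), the schedule is short. Lemma~\ref{lem:reverse} then yields a vertex cover of size at most $F-D\le K$. The degenerate inputs with $m=0$, $K<0$ or $K\ge m$ are answered directly and mapped to fixed yes or no instances, so the reduction is total.

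\emph{Main obstacle.} The only delicate point is confirming that the strict inequality $L<2D$ holds for all admissible $K$, since the invariant of Lemma~\ref{lem:invariant} depends on it. With $D=m+1$ and $K\le m$ this holds. Combining the two directions with NP membership gives strong NP-completeness.
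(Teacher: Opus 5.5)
Your proposal is correct and follows essentially the same route as the paper: it combines Lemma~\ref{lem:forward} and Lemma~\ref{lem:reverse} via $L=D+K<2D$, and then bounds the instance size by counting $5m$ points, $4m$ resources and $5m$ requests and by bounding the integer coordinates polynomially (the paper makes your $O(m^2)$ explicit as the total path length $4m^2+4m-2$). Your remarks on the degenerate inputs and on why $K\le m$ makes every schedule within the threshold short match what the paper states in the construction.
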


\begin{proof}
Lemma~\ref{lem:forward} gives a schedule of cost at most $D+K$ if the original
graph has a vertex cover of size at most $K$. Conversely, by
Lemma~\ref{lem:reverse}, a schedule of cost at most $D+K<2D$ gives such a
cover. The construction contains $5m$ points, $4m$ resources, and $5m$ requests.
The edge lengths belong to $\{1,D,2D\}$. If the first coordinate is zero, then
all coordinates and distances are bounded by the total length of the path,
\[
  m(2D+2)+(m-1)2D=4m^2+4m-2.
\]
The threshold $L\le2m+1$ is also polynomially bounded. Hence, the hardness is
strong. Membership in NP was noted in Section~\ref{sec:model}.
\end{proof}

\section{A Polynomial Case: Networks with a Common Hub}
\label{sec:positive}

Now consider an unweighted undirected graph $G=(X,E)$ with a universal vertex
$o$ adjacent to all other vertices. In its shortest-path metric, all distances
belong to $\{0,1,2\}$. Such a network models a system in which compatible
states are connected by a direct transition, while any incompatible movement is
available through the common hub $o$.

For a set $A\subseteq X$ of size at most $k$, define the canonical
configuration $K(A)$ as follows. Place one resource at each point of
$A\setminus\{o\}$, and place all remaining resources at $o$. This configuration
covers $A$, including the case $o\in A$. For configurations $C,C'$, let
$\trans(C,C')$ denote the minimum possible duration of their parallel
transition, that is, the bottleneck matching between the two multisets of
positions.

\begin{lemma}
\label{lem:canonical}
Let configurations $C$ and $C'$ cover sets $A$ and $A'$, respectively, both of
size at most $k$, and suppose that $\trans(C,C')>0$. Then
\[
  \trans\bigl(K(A),K(A')\bigr)\le \trans(C,C').
\]
The cost also does not increase if only one of the configurations is replaced
by the corresponding canonical configuration.
\end{lemma}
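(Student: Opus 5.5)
The plan is to establish the one-sided replacement statement first and then apply it twice. Two facts about the metric do most of the work. All distances lie in $\{0,1,2\}$, and $d(o,x)\le 1$ for every $x\in X$. The hypothesis $\trans(C,C')>0$ means $\trans(C,C')\ge 1$, so any movement that starts or ends at $o$ never exceeds the current bottleneck value. Since resources are identical and $\trans$ is a bottleneck matching between multisets, I may relabel the resources of $K(A)$ freely, and $\trans$ is symmetric.

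\emph{Step 1: $\trans(K(A),C')\le\trans(C,C')$.} Fix an optimal matching between $C$ and $C'$, so that resource $i$ moves from $C(i)$ to $C'(i)$. For each $a\in A\setminus\{o\}$, pick one resource $i_a$ with $C(i_a)=a$; it exists because $C$ covers $A$, and distinct $a$ give distinct $i_a$. Realize $K(A)$ by putting resource $i_a$ at $a$ and every other resource at $o$; this has the correct multiset. Under the same assignment of targets $C'(i)$:
\begin{itemize}
\item the resources $i_a$ travel exactly the distances they travelled before;
\item every other resource travels $d(o,C'(i))\le 1\le\trans(C,C')$.
\end{itemize}
Hence $\trans(K(A),C')\le\trans(C,C')$. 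By symmetry, $\trans(C,K(A'))\le\trans(C,C')$ as well. This gives the final sentence of the lemma.

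\emph{Step 2: both replacements.} If $\trans(K(A),C')>0$, apply Step 1 to the pair $(C',K(A))$ with $C'$ replaced by $K(A')$. This gives $\trans(K(A),K(A'))\le\trans(K(A),C')\le\trans(C,C')$.

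The main obstacle is the degenerate case $\trans(K(A),C')=0$, because Step 1 cannot then be applied again. In this case $C'$ coincides with $K(A)$ as a multiset, and I must show directly that $\trans(C',K(A'))\le 1$, which suffices since $\trans(C,C')\ge1$. I would argue as in Step 1. Keep one resource of $C'$ fixed at each point of $A'\setminus\{o\}$, which is possible because $C'$ covers $A'$. Send every other resource to $o$, at cost at most $1$. The resulting multiset is exactly $K(A')$, so the bound holds and the lemma is complete.
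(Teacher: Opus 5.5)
Your proof is correct and rests on the same idea as the paper's: keep one copy of each required point other than $o$, send all remaining resources to $o$, and use $d(o,x)\le 1\le\trans(C,C')$. The paper canonicalizes both configurations at once under a single fixed matching, so the bound $1\le b$ always refers to the original transition and your degenerate case $\trans(K(A),C')=0$ never arises. Your sequential route has to treat that case separately, and you handle it correctly.
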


\begin{proof}
Set $b:=\trans(C,C')\ge1$ and fix a matching of movements of length at most
$b$. In each configuration being canonicalized, keep one copy of each required
point different from $o$, and replace the remaining positions by $o$. If both
endpoints of a matched pair remain unchanged, then its length is still at most
$b$. Otherwise, at least one endpoint has become $o$, and therefore the new
length is at most $1\le b$. The obtained matching is feasible after replacing
one or both configurations, which proves all claims.
\end{proof}

\begin{theorem}
\label{thm:poly}
On metrics of unweighted graphs with a universal vertex, the problem
\textsc{Offline-Time-$k$-Server} can be solved in polynomial time.
\end{theorem}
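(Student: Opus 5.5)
The plan is to reduce the problem to a shortest-path computation over \emph{segments} of the request sequence. Fix an optimal schedule $C_0,C_1,\dots,C_N$. Call step $t$ \emph{paid} if $\Delta_t>0$. Between two consecutive paid steps the configuration does not change. Hence the paid steps cut $\sigma$ into maximal intervals $[s_j,t_j]$, and on each interval a single configuration $D_j$ covers every request $r_s$ with $s_j\le s\le t_j$. Possibly there is also an initial interval served by $C_0$ itself at cost $0$. Since all distances lie in $\{0,1,2\}$, every paid step costs $1$ or $2$. The total cost is the sum over consecutive intervals of $\trans(D_j,D_{j+1})$, plus $\trans(C_0,D_1)$ if the first interval is paid.

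The key step is to show that each $D_j$ may be replaced by the canonical configuration $K(S_j)$. Here $S_j$ is the set of distinct requests in interval $j$. It satisfies $|S_j|\le k$, because it is covered by $k$ resources.

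\begin{itemize}
\item By Lemma~\ref{lem:canonical}, replacing $D_j$ by $K(A_j)$, where $A_j\supseteq S_j$ is a set of size at most $k$ covered by $D_j$, does not increase any positive transition cost. For the transition out of $C_0$, I use the clause of the lemma that allows canonicalizing only one side.
\item I then pass from $K(A_j)$ to $K(S_j)$ by the same argument as in the proof of Lemma~\ref{lem:canonical}. Removing points of $A_j\setminus S_j$ only turns some positions into $o$. Any matched pair with a changed endpoint then has length at most $1$, which is at most the original positive cost $b$.
\end{itemize}

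These replacements keep every paid transition cost from increasing. They also keep every interval served, since $K(S_j)$ covers $S_j$. Conversely, any partition into intervals, each with at most $k$ distinct requests, together with canonical configurations gives a feasible schedule. Its cost is exactly the sum of the consecutive $\trans$ values, because a single transition of the appropriate bottleneck length is made at the first request of each interval.

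The algorithm is then a dynamic program over intervals. Let $f(s,t)$ be the minimum cost of serving $r_1,\dots,r_t$ when the last interval is $[s,t]$, which requires $|\{r_s,\dots,r_t\}|\le k$. The base cases are:
\begin{itemize}
\item $f(1,t)=0$ if $C_0$ covers $r_1,\dots,r_t$;
\item otherwise $f(1,t)=\trans(C_0,K(S_{1,t}))$, where $S_{s,t}$ denotes the set of distinct requests in $[s,t]$.
\end{itemize}
The recurrence is
\[
f(s,t)=\min_{s'\le s-1}\Bigl(f(s',s-1)+\trans\bigl(K(S_{s',s-1}),K(S_{s,t})\bigr)\Bigr).
\]
The answer is $\min_s f(s,N)$. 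There are $O(N^2)$ states and $O(N)$ predecessors per state.

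Each $\trans$ value lies in $\{0,1,2\}$ and is computed as follows:
\begin{itemize}
\item the value is $0$ iff the two multisets are equal;
\item otherwise it is $1$ iff a perfect matching exists in the bipartite graph joining positions at distance at most $1$, which one bipartite matching computation decides;
\item otherwise it is $2$.
\end{itemize}
So everything runs in polynomial time in $N$, $k$ and $|X|$.

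I expect the main obstacle to be the justification step, in particular the edge cases:
\begin{itemize}
\item a paid step whose optimal transition happens to cost $0$ between the canonical forms, which is harmless since the cost only decreases;
\item the point $o$ lying in $S_j$;
\item the initial configuration $C_0$, which is never canonicalized.
\end{itemize}
I will handle these explicitly, using only the one-sided version of Lemma~\ref{lem:canonical} for $C_0$.
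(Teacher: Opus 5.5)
\textbf{There is a genuine gap in your dynamic program, at the $C_0$ edge case you flagged.} Your canonicalization argument and your computation of $\trans$ match the paper. The problem is the clause $f(1,t)=0$. That state stands for the prefix $r_1,\dots,r_t$ being served by $C_0$ itself. Yet your recurrence charges the next transition as $\trans\bigl(K(S_{1,t}),K(S_{t+1,t'})\bigr)$, as if the resources already sat in the canonical configuration $K(S_{1,t})$.

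Reaching $K(S_{1,t})$ from $C_0$ is not free. The one-sided part of Lemma~\ref{lem:canonical} only gives $\trans(K(S_{1,t}),\cdot)\le\trans(C_0,\cdot)$, which is the wrong direction for realizability. So your DP can return a value below the optimum.

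\textbf{Counterexample.} Take the star with centre $o$ and leaves $x,z,w$, with $k=2$, $C_0=\{x,z\}$ and $\sigma=(x,w)$. Your DP gives $f(1,1)=0$ and $f(2,2)=0+\trans(\{x,o\},\{w,o\})=1$, so it outputs $1$. However, the resource that ends at $w$ starts at distance $2$ from $w$. By the triangle inequality, $\Delta_1+\Delta_2\ge 2$ for every schedule, so the optimum is $2$. Hence your converse claim, that every partition with canonical configurations is realized at the sum of its $\trans$ values, fails for paths through such a state.

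\textbf{How to fix it.} Do what the paper does and do not make the free prefix a DP state.
\begin{itemize}
\item Let every interval $[s,t]$ with $|S_{s,t}|\le k$ be entered directly from $C_0$ whenever $C_0$ covers $r_1,\dots,r_{s-1}$, at cost $\trans(C_0,K(S_{s,t}))$. These are the paper's source edges.
\item Take the minimum of this entry cost with your recurrence.
\item Handle separately the case where $C_0$ covers all of $\sigma$, where the answer is $0$.
\end{itemize}
With this change every DP path is realizable. Your exchange argument then shows the optimum is attained among these paths, because the first paid interval is canonicalized with the one-sided lemma against $C_0$.

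A minor remark: the detour through $A_j\supseteq S_j$ is unnecessary. Lemma~\ref{lem:canonical} only requires that $D_j$ cover $S_j$.
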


\begin{proof}
Regard configurations as multisets, and separate a possibly empty initial
prefix served without changing $C_0$. Split the remaining sequence in an
optimal schedule into maximal intervals with a constant configuration. Each
such interval $I$ contains at most $k$ distinct requested points, forming a set
$A(I)$. All original transitions between neighboring intervals, as well as the
first transition from $C_0$, are positive.

Simultaneously replace the configurations of these intervals by $K(A(I))$. For
each transition between intervals, apply Lemma~\ref{lem:canonical} to the pair
of original configurations. For the first transition from $C_0$, use its
one-sided statement. Transitions inside intervals remain zero. Thus, the cost
does not increase, even if some new transitions between intervals have become
zero.

Such a partition is found by shortest-path dynamic programming on an acyclic
interval graph. Its vertices are all intervals $I=[p,q]$ for which
$|A(I)|\le k$. An edge from $[p,q]$ to $[q+1,r]$ has weight
\[
  \trans\bigl(K(A([p,q])),K(A([q+1,r]))\bigr).
\]
The source is connected to $[p,q]$ if the initial configuration covers requests
$1,\ldots,p-1$. The weight of such an edge is
$\trans(C_0,K(A([p,q])))$. A shortest path to an interval ending at $N$ gives
the exact optimum: every obtained canonical schedule defines such a path, and
every path is realized by movements at the beginnings of its intervals and zero
transitions inside them. The case in which all requests are already covered by
$C_0$, including the empty sequence, is checked separately.

There are $O(N^2)$ intervals and $O(N^3)$ transitions. To compute $\trans$,
first check equality of the multisets, which gives cost $0$. Otherwise, the
cost is $1$ if there is a perfect matching in the bipartite graph of pairs of
positions at distance at most $1$, and is $2$ otherwise. Coincident positions
are represented by different vertex copies. Finding a matching by successive
augmenting paths takes $O(k^3)$ time. Taking the preparation of the adjacency
matrix into account, the final bound is $O(|X|^2+N^3k^3)$ for $N\ge1$, which is
polynomial in the input size.
\end{proof}

\section{Conclusion}

Offline planning of parallel resource relocation becomes strongly NP-hard
already on finite subsets of the line. The simple geometry of admissible
positions does not remove the difficulty of minimizing the sum of durations of
parallel stages. At the same time, for networks with a common hub, there is an
exact polynomial-time algorithm. Natural next questions include approximation
algorithms for the linear case and extending the polynomial boundary to broader
classes of graph metrics.

\bibliography{offline_references}
\end{document}